\newcommand{\Bentropy}{\mathcal{B}}
\newcommand{\Bio}{\mathcal{B}_{\mathrm{I/O}}}
\newcommand{\ESymbol}{\mathbb{E}}
\newcommand*\ie{\mbox{i.\hspace{.2ex}e.}}
\newcommand*\eg{\mbox{e.\hspace{.2ex}g.}}
\newcommand\algname[1]{#1\xspace}
\let\epsilon\varepsilon
\title{Deterministic Cache-Oblivious Funnelselect}
\author{Gerth Stølting Brodal}{Aarhus University, Denmark}{gerth@cs.au.dk}{https://orcid.org/0000-0001-9054-915X}{Independent Research Fund Denmark, grant~9131-00113B.}
\author{Sebastian Wild}{University of Liverpool, UK}{wild@liverpool.ac.uk}{https://orcid.org/0000-0002-6061-9177}{Engineering and Physical Sciences Research Council grant EP/X039447/1.}
\authorrunning{G.\,S. Brodal and S. Wild}
\keywords{Multiple selection, cache-oblivious algorithm, entropy bounds}
\begin{document}

\maketitle


\begin{abstract}
    In the multiple-selection problem one is given an unsorted array~$S$ of $N$ elements and an array of $q$ query ranks $r_1<\cdots<r_q$, and the task is to return, in sorted order, the $q$ elements in $S$ of rank $r_1, \ldots, r_q$, respectively. 
    The asymptotic deterministic comparison complexity of the problem was settled by Dobkin and Munro [JACM~1981]. 
    In the I/O model an optimal I/O complexity was achieved by Hu \textit{et al}.~[SPAA~2014]. 
    Recently [ESA~2023], we presented a \emph{cache-oblivious} algorithm with matching I/O complexity, named \emph{funnelselect}, since it heavily borrows ideas from the cache-oblivious sorting algorithm \emph{funnelsort} from the seminal paper by Frigo,  Leiserson, Prokop and Ramachandran [FOCS~1999].
    \algname{Funnelselect} is inherently randomized as it relies on sampling for cheaply finding many good pivots.
    In this paper we present \emph{deterministic funnelselect}, achieving the same optional I/O complexity cache-obliviously without randomization. 
    Our new algorithm essentially replaces a single (in expectation) reversed-funnel computation using random pivots by a recursive algorithm using multiple reversed-funnel computations. 
    To meet the I/O bound, this requires a carefully chosen subproblem size based on the entropy of the sequence of query ranks; 
    \algname{deterministic funnelselect} thus raises distinct technical challenges not met by randomized \algname{funnelselect}.
    The resulting worst-case I/O bound is $O\bigl(\sum_{i=1}^{q+1} \frac{\Delta_i}{B} \cdot \log_{M/B} \frac{N}{\Delta_i} + \frac{N}{B}\bigr)$, where $B$ is the external memory block size, $M\geq B^{1+\epsilon}$ is the internal memory size, for some constant $\epsilon>0$, and $\Delta_i = r_{i} - r_{i-1}$ (assuming $r_0=0$ and $r_{q+1}=N + 1$). 
\end{abstract}


\section{Introduction}

We present the first optimal deterministic cache-oblivious algorithm for the multiple-selection problem. 
In the multiple-selection problem one is given an unsorted array~$S$ of $N$ elements and an array $R$ of $q$ query ranks in increasing order $r_1<\cdots<r_q$, and the task is to return, in sorted order, the $q$ elements of $S$ of rank $r_1, \ldots, r_q$, respectively;  (see \wref{fig:example} for an example).

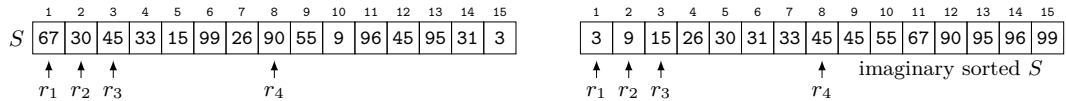
\begin{figure}[tbp]
	\adjustbox{max width=\linewidth}{\begin{tikzpicture}[scale=.45]
	\small
        \node[anchor=east] at (0.5,0) {$S$};
		\foreach [count=\i] \x in {67, 30, 45, 33, 15, 99, 26, 90, 55, 9, 96, 45, 95, 31, 3} {
			\node[scale=.9] at (\i,0) {\ttfamily \x} ;
			\draw (\i-.5,-.5) rectangle ++(1,1) ;
			\node[scale=.6] at (\i,.8) {\ttfamily \i} ;
		}
		\foreach [count=\i] \r in {1, 2, 3, 8} {
			\draw[latex-] (\r,-.7) -- ++(0,-1) node[inner sep=3pt,fill=white] {$r_{\i}$} ;
		}
		\begin{scope}[shift={(17,0)}]
			\foreach [count=\i] \x in {3, 9, 15, 26, 30, 31, 33, 45, 45, 55, 67, 90, 95, 96, 99} {
				{\node[scale=.9] at (\i,0) {\ttfamily \x} ;}
				\draw (\i-.5,-.5) rectangle ++(1,1) ;
				\node[scale=.6] at (\i,.8) {\ttfamily \i} ;
			}
			\foreach [count=\i] \r in {1, 2, 3, 8} {
				\draw[latex-] (\r,-.7) -- ++(0,-1) node[inner sep=3pt,fill=white] {$r_{\i}$} ;
			}
			\node at (12,-1) {\footnotesize imaginary sorted $S$} ;
		\end{scope}
	\end{tikzpicture}}
	\caption{
		Example input with $N=15$, $q=4$ and $R[1..q]=[1,2,3,8]$. 
		The expected output \texttt 3, \texttt9, \texttt{15}, \texttt{45} is obvious from the sorted array (right).
		(The sorted array is for illustration only; the goal of efficient multiple-selection algorithms is to avoid ever fully sorting the input.)
	}
	\label{fig:example}
\end{figure}

On top of immediate applications, the multiple-selection problem is of interest as it gives a natural common generalization of (single) selection by rank (using a single query rank $r_1=r$) and fully sorting an array (corresponding to selecting every index as a query rank, \ie, $q=N$ and $r_i=i$ for $i=1,\ldots,N$).
It thus allows us to quantitatively study the transition between these two foundational problems, which are of different complexity and each have their distinct set of algorithms.
For example, the behavior of selection and sorting with respect to external memory is quite different: For single selection, the textbook \algname{median-of-medians algorithm}~\cite{BFPRT73} simultaneously works with optimal cost in internal memory, external memory, and the cache-oblivious model (models are defined below).  For sorting, by contrast, the introduction of each model required a substantially modified algorithm to achieve optimal costs: 
Standard \algname{binary mergesort} is optimal in internal memory, but requires $\approx M/B$-way merging to be optimal in external memory, where $M$ is the internal memory size and $B$ the external memory block size, measured in elements~\cite{AggarwalVitter88}; achieving the same cache obliviously, \ie, without knowledge of $B$ and $M$, requires the judiciously chosen buffer sizes from the recursive constructions of \algname{funnelsort}~\cite{FrigoLPR99}.

Since multiple selection simultaneously generalizes both problems, it is not surprising that also here subsequent refinements were necessary going from internal to external to cache-oblivious; the most recent result being our algorithm \algname{funnelselect}~\cite{BrodalWild23}.
However, all algorithms mentioned above for single selection and sorting are \emph{deterministic}. By constrast, \algname{funnelselect} is inherently relying on randomization and known deterministic external-memory algorithms~\cite{BarbayGuptaRaoSorenson2016,HuTaoYangZhou2014} are crucially relying on the knowledge of $M$ and $B$.
Prior to this work it thus remained open whether a single deterministic cache-oblivious algorithm exists that smoothly interpolates between selection and sorting without having to resort to randomization.

In this paper, we answer this question in the affirmative.
Our algorithm \emph{deterministic funnelselect} draws on techniques from cache-oblivious sorting (\algname{funnelsort}) and existing multiple-selection algorithms, but it follows a rather different approach to our earlier randomized algorithm~\cite{BrodalWild23} and previous (cache-conscious) external-memory algorithms. 
A detailed comparison is given below.
%

\subsection{Model of computation and previous work}

Our results are in the cache-oblivious model of Frigo, Leiserson, Prokop and Ramachandran~\cite{FrigoLPR12},
a hierarchical-memory model with an infinite external memory and an internal memory of capacity $M$ elements, 
where data is transferred between internal and external memory in blocks of $B$ consecutive elements. Algorithms are compared by their I/O cost, \ie, the number of block transfers or \emph{I/O}s (input/output operations). This is similar to the external-memory model by Aggarwal and Vitter~\cite{AggarwalVitter88}.
Crucially, in the cache-oblivious model, algorithms \emph{do not know} $M$ and $B$ and I/Os are assumed to be performed automatically by an optimal (offline) paging algorithm.
Cache-oblivious algorithms hence work for any parameters $M$ and $B$, and they even adapt to multi-level memory hierarchies (under certain conditions~\cite{FrigoLPR12}). 
	
\begin{table}[tbp]
  \caption{Algorithms for selection and multiple selection. CO = cache-oblivious, $\ESymbol$ = expected, wc = worst-case bounds. Note that Barbay \textit{et al.} assume a tall cache $M \ge B^{1+\epsilon}$, whereas Hu \textit{et al.} do not.}
  \centering
  \resizebox{\linewidth}!{
  \begin{tabular}{lcccl}
  	\hline
  	\textbf{Reference}\strut                                 &            &      \textbf{Comparisons}      &      \textbf{I/Os}     & \textbf{Comments}                       \\
    \hline
  	\textsl{\textbf{Single selection}}\rule{0pt}{12pt}       &            &                                &                        &                                         \\
  	Hoare~\cite{H61find}                                     & $\ESymbol$ &  $2\ln 2\Bentropy + 2 N+o(N)$  &         $O(N/B)$       & CO, randomized                          \\
  	Floyd \& Rivest~\cite{FloydRivest1975}                   & $\ESymbol$ &    $N+\min\{r,N{-}r\}+o(N)$    &         $O(N/B)$       & CO, randomized                          \\
  	Blum \textit{et al.}~\cite{BFPRT73}                      &     wc     &             $5.4305 N$         &         $O(N/B)$       & CO, deterministic                       \\
    Sch\"onhage \textit{et al.}~\cite{SchonhagePP76}         &     wc     &         $3 N + o(N)$           &            ?           & median, deterministic                   \\
    Dor \& Zwick~\cite{DorZwick1999}                         &     wc     &         $2.95 + o(N)$         &            ?           & median, deterministic                    \\[1ex]
  	\textsl{\textbf{Multiple selection}}                     &            &                                &                        &                                         \\
  	Chambers~\cite{Chambers71,Prodinger1995}                 & $\ESymbol$ &    $2\ln 2 \Bentropy+O(N)$     &  $O((\Bentropy+N)/B)$  & CO, randomized                          \\
  	Dobkin \& Munro~\cite{DobkinMunro81}                     &     wc     &       $3\Bentropy+O(N)$        & $O((\Bentropy+N) / B)$ & CO, deterministic                       \\
  	Kaligosi \textit{et al.}~\cite{KMMS05}                   &     wc     & $\Bentropy+o(\Bentropy)+O(N)$  & $O((\Bentropy+N) / B)$ & CO, deterministic                       \\
  	Hu \textit{et al.}~\cite{HuTaoYangZhou2014}              &     wc     &         $O(N \lg(q))$          & $O(N/B \log_{M/B}(q/B))$ & deterministic \\
                                                             &     wc     &       $O(\Bentropy + N)$       &     $O(\Bio + N/B)$    & (from closer analysis)                  \\
  	Barbay \textit{et al.}~\cite{BarbayGuptaRaoSorenson2016} &     wc     & $\Bentropy+o(\Bentropy)+O(N)$  &     $O(\Bio + N/B)$    & online, determ., $M \ge B^{1+\epsilon}$ \\
    Brodal \& Wild~\cite{BrodalWild23}                       & $\ESymbol$ &       $O(\Bentropy + N)$       &     $O(\Bio + N/B)$    & CO, randomized, $M \ge B^{1+\epsilon}$  \\    
    \textit{This paper}                                      &     wc     &       $O(\Bentropy + N)$       &     $O(\Bio + N/B)$    & CO, deterministic, $M \ge B^{1+\epsilon}$  \\    
    \hline
  \end{tabular}}	
  \label{tab:results}
\end{table}

The multiple-selection problem was first formally addressed by Chambers~\cite{Chambers71}, who considered it a generalization of \algname{quickselect}~\cite{H61find}. Prodinger~\cite{Prodinger1995} proved that Chambers' algorithm achieves an optimal \emph{expected} running time up to constant factors: $O(\Bentropy+N)$, where $\Bentropy = \sum_{i=1}^{q+1} \Delta_i \lg \frac{N}{\Delta_i}$ with $\Delta_i = r_{i} - r_{i-1}$, for $1 \leq i \leq q + 1$, 
assuming $r_0=0$ and $r_{q+1}=N + 1$, and $\lg$ denoting the binary logarithm. We call $\Bentropy$ the \emph{(query-rank) entropy} of the sequence of query ranks~\cite{BarbayGuptaRaoSorenson2016}.
It should be noted that $\Bentropy+N=O(N(1+\lg q))$, but the latter bound does not take the location of query ranks into account; for example, if $q=\Theta\bigl(\sqrt{n}\bigr)$ queries are in a range of size $O(N/\lg N)$, \ie, $r_q - r_1 = O(N/\lg N)$, then the entropy bound is $O(N)$ whereas the latter $N(1+\lg q) = \Theta(N\lg N)$.

Dobkin and Munro~\cite{DobkinMunro81} showed that $\Bentropy-O(N)$ comparisons are necessary to find all ranks $r_1,\ldots,r_q$ (in the worst case).
Deterministic algorithms with that same $O(\Bentropy+N)$ running time are also known~\cite{DobkinMunro81,KMMS05},
but as for single selection, the deterministic algorithms were presented later than the randomized algorithms and require more sophistication.
Multiple selection in external-memory was studied by Hu \textit{et al}.~\cite{HuTaoYangZhou2014} and Barbay \textit{et al}.~\cite{BarbayGuptaRaoSorenson2016}. Their algorithms have an I/O cost of $O\bigl(\Bio + \frac{N}{B}\bigr)$, where the \emph{``I/O entropy''} $\Bio = \frac{\Bentropy}{B \lg(M/B)}$. 
An I/O cost of $\Omega(\Bio) - O(\frac NB)$ is known to be necessary~\cite{BarbayGuptaRaoSorenson2016,BrodalWild23}.
A more comprehensive history of the multiple-selection problem appears in~\cite{BrodalWild23}; \wref{tab:results} gives an overview.

We note that many existing time- and comparison-optimal multiple-selection algorithms are actually already cache oblivious, but they are not optimal with respect to the number of I/Os performed when analyzed in the cache-oblivious model (the obtained I/O bounds are a factor $\lg(M/B)$ away from being optimal).

\subsection{Result}

Our main result is the cache-oblivious algorithm \emph{deterministic funnel\-select} achieving the following efficiency (see \wref{thm:deterministic-funnelselect} for the full statement and proof).
	
\begin{theorem}
\label{thm:upper-bound}
	There exists a deterministic cache-oblivious algorithm solving the 
	multiple-selection problem using 
    $O(\Bentropy + N)$ comparisons and
    $O\bigl(\Bio + \frac{N}{B}\bigr)$ I/Os in the worst case,
    assuming a tall cache $M\geq B^{1+\epsilon}$.
\end{theorem}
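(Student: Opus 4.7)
The plan is to design a recursive algorithm whose main operation is a reversed-funnel distribution --- adapted from \algname{funnelsort} --- around a set of deterministically chosen pivots. The key difference from the randomized \algname{funnelselect} is that, instead of running one large funnel with random pivots in a single pass, we chain several smaller reversed-funnel calls of carefully chosen arity and obtain the required pivots deterministically at each step (most naturally, by a recursive self-invocation on a well-sized sample or a dedicated quantile-finding routine).

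Concretely, given an instance $(S, R)$ with $|S| = n$ and $q'$ remaining query ranks of entropy $\Bentropy'$, I would choose a subproblem size $s = s(\Bentropy', n)$ --- the central design parameter, whose value is dictated by the I/O analysis --- and then (i) deterministically compute pivots that split $S$ into $\Theta(n/s)$ balanced buckets, (ii) apply a reversed funnel of arity $\Theta(n/s)$ to route all elements of $S$ into these buckets at an I/O cost of $O((n/B)\logMB(n/s))$ under the tall-cache assumption, and (iii) recurse on each bucket that still contains at least one query rank, stopping when the subproblem fits in cache.

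For the analysis, both bounds follow by induction on the recursion. A gap $\Delta_i$ between consecutive query ranks survives $O(\log(N/\Delta_i))$ recursion levels and therefore contributes $O(\Delta_i \lg(N/\Delta_i))$ to the total comparison cost; summing over all $q+1$ gaps reproduces $\Bentropy$, yielding $O(\Bentropy + N)$ comparisons overall. The I/O argument is analogous: replacing $\lg$ with $\logMB$ --- by virtue of the funnel's amortized efficiency --- gives the matching $O(\Bio + N/B)$ bound.

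The main obstacle will be choosing $s$ (equivalently the per-level funnel arity) so that the recursion levels compose correctly, yielding the full $\logMB$ factor in the I/O bound rather than the weaker $\log$ of Hu \textit{et al.} Cache-obliviously, one cannot enforce a concrete shrinkage factor like $M^{\Theta(1)}$ per level because $M$ is unknown; the resolution, inherited from \algname{funnelsort}, is that a $k$-merger of internal size $\Theta(k^{1+\epsilon})$ automatically becomes I/O-efficient once $k^{1+\epsilon} \leq M$. Translating this into a multiple-selection setting in which subproblem sizes are further constrained by the entropy of the remaining query ranks is precisely the ``carefully chosen subproblem size'' alluded to in the abstract, and I expect the analysis to require a level-by-level amortization that identifies the critical recursion depth at which subproblems first fit into cache and accounts for the cost above and below this depth separately.
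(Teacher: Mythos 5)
Your high-level architecture matches the paper's: recursive multi-way partitioning around deterministic pivots with a reversed funnel, pruning of query-free buckets, and an entropy/potential-style charging argument. But the proposal leaves unresolved exactly the two ingredients that make the construction work, and you flag the first of them yourself as "the main obstacle." First, the choice of the subproblem size: the paper does \emph{not} derive it from the cache geometry or a critical recursion depth, but from the query ranks alone~--- it takes $\Delta$ to be a \emph{weighted median} of the gap sizes $\Delta_1,\ldots,\Delta_{q+1}$ (each $\Delta_i$ weighted by itself, computed by linear-time weighted selection). This single choice simultaneously guarantees that (A) the buckets that still contain queries hold at most $N/2$ elements in total, so the recursion shrinks geometrically in aggregate, and (B) at least $N/2$ elements lie in gaps of size at most $\Delta$, so the released potential is $\Omega(N \lg \frac{N}{\Delta})$, enough to pay for the $O(N\lg k)$ partitioning cost with $k=\Theta(N/\Delta)$. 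Your induction sketch ("a gap survives $O(\log(N/\Delta_i))$ levels, hence contributes $O(\Delta_i\lg(N/\Delta_i))$") is calibrated for binary partitioning; with arity $k$ a gap survives far fewer levels but each costs $\Theta(\lg k)$ per element, and making the product telescope to $\lg(N/\Delta_i)$ is precisely what property (B) is for. Second, the pivots: deterministically computing $\Theta(n/s)$ balanced quantiles up front "by a recursive self-invocation or a quantile-finding routine" is itself a multiple-selection instance with near-maximal entropy, and there is no obvious way to do it I/O-optimally without circularity. The paper instead generates pivots \emph{incrementally}: it feeds batches of $\Delta$ elements through the current $k$-partitioner and, whenever a bucket exceeds $\Delta$, splits it at its median via linear-time selection, rebuilding the partitioner for the next batch.

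A further gap is the missing base case. The $k$-partitioner of \wref{lem:k-partitioner} requires $N\geq k^d$ and costs an additive $O(k)$ I/Os (with $O(k^2)$ total over all batches), so one cannot use arity $\Theta(N/\Delta)$ when $\Delta$ is small relative to $N$. The paper handles this by falling back to \algname{funnelsort} whenever $(2N)^d\geq\Delta^{d+1}$ or $N^{1+\frac{1}{1+\epsilon}}\geq\Delta^2$, and then shows that in this regime $\Bio=\Omega\bigl(\frac{N}{B}\logMB\frac{N}{B}\bigr)$, so outright sorting is within the target bound. Without this case, your step (ii), claiming cost $O\bigl(\frac{n}{B}\logMB\frac{n}{s}\bigr)$ for arbitrary arity, does not follow from the funnel machinery.
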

	
At the high level, our algorithm uses the standard overall idea of a recursive partitioning algorithm and pruning recursive calls containing no rank queries, an idea dating back to the first algorithm by Chambers~\cite{Chambers71}. 
In the cache-aware external-memory model, I/O efficient algorithms are essentially obtained by replacing binary partitioning (as used in~\cite{Chambers71}) by an external-memory $\Theta(M/B)$-way partitioning \cite{BarbayGuptaRaoSorenson2016,HuTaoYangZhou2014}.
Unfortunately, in the cache-oblivious model this is not possible, since the parameters $M$ and $B$ are unknown to the algorithm.
To be I/O efficient in the cache oblivious model, both our previous algorithm \algname{randomized funnelselect}~\cite{BrodalWild23} and our new algorithm \algname{deterministic funnelselect} apply a cache-oblivious multi-way $k$-partitioner to distribute elements into $k$ buckets given a set of $k-1$ pivot elements, essentially reversing the computation done by the $k$-merger used by \algname{funnelsort}~\cite{FrigoLPR99}. The $k$-partitioner is a balanced binary tree of $k-1$ pipelined binary partitioners.

The key difference between our randomized and deterministic algorithms is that in our randomized algorithm we use a single $N^{\Theta(\epsilon)}$-way partitioner using randomly selected pivots and
truncate work inside the partitioner for subproblems that (with high probability) will not contain any rank queries.  This is done by estimating the ranks of the pivots through sampling and pruning subproblems estimated to be sufficiently far from any query ranks. 
In our deterministic version, we choose $k$ smaller and deterministically compute pivots, such that all elements are pushed all the way down through a $k$-partitioner without truncation (eliminating the need to know the (approximate) ranks of the pivots before the $k$-partitioning is finished), while we choose $k$ such that the buckets with unresolved rank queries (that we  have to recursive on) in total contain \emph{at most half} of the elements.
To compute~$k$, we apply a linear-time \emph{weighted}-median finding algorithm on $\Delta_1,\ldots,\Delta_{q+1}$.
While \algname{randomized funnelselect} can handle buckets with unresolved rank queries directly using sorting, \algname{deterministic funnelselect} needs to recursively perform multiple-selection on the buckets to achieve the desired I/O performance.

%
%
%
%
%
%
%

\section{Preliminaries}

Throughout the paper we assume that the input to a multiple-selection algorithm is given as two arrays $S[1..N]$ and $R[1..q]$, where $S$ is an unsorted array of $N$ elements from a totally ordered universe, and $R$ is a sorted array $r_1, \ldots, r_q$ of $q$ distinct query ranks, where $1 \leq r_1 < \cdots < r_q \leq N$. 
The array~$S$ is allowed to contain duplicate elements.
Our task is to produce/report an array of the $q$ order statistics $S_{(r_1)}, \ldots, S_{(r_q)}$, where $S_{(r)}$ is the $r$th smallest element in $S$, \ie, the element at index~$r$ in an array storing $S$ after sorting it.

Our new deterministic cache-oblivious multiple-selection algorithm makes use of the following three existing cache-oblivious results for single selection, weighted selection, sorting, and multi-way partitioning.

\begin{lemma}[Blum, Floyd, Pratt, Rivest, Tarjan~{\cite[Theorem~1]{BFPRT73}}]
\label{lem:selection}
    Selecting the $k$-th smallest element in an unsorted array of $N$ elements can be done with $O(N)$ comparisons and $O\bigl(1+\frac{N}{B}\bigr)$ I/Os in the cache-oblivious model.
\end{lemma}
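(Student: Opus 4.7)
The plan is to verify that the classical median-of-medians algorithm of Blum et al., implemented as a sequential procedure, already achieves both bounds without any cache-conscious tuning. Concretely: (i)~partition $S$ into $\lceil N/5\rceil$ contiguous blocks of (at most) five elements, compute each block's median in $O(1)$ operations, and write it to an auxiliary array~$M$ of size $\lceil N/5\rceil$; (ii)~recursively select the median~$p$ of~$M$; (iii)~three-way partition $S$ around~$p$ by a single sequential scan into segments ``$<p$'', ``$=p$'', and ``$>p$''; (iv)~from the segment sizes decide whether $p$ itself is the rank-$k$ element, and otherwise recurse on the unique segment that must contain it.

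For the comparison count I would reproduce the standard analysis of Blum et al.: the median-of-medians guarantees at least $\lceil 3N/10\rceil$ elements strictly below~$p$ and at least $\lceil 3N/10\rceil$ strictly above it, so step~(iv) recurses on at most $\lceil 7N/10\rceil$ elements, yielding the recurrence $T(N)\le T(\lceil N/5\rceil)+T(\lceil 7N/10\rceil)+cN=O(N)$.

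For the I/O bound the key observation is that steps~(i), (iii), and~(iv) each consist of a constant number of sequential scans over arrays of size $O(N)$, which incur $O(1+N/B)$ I/Os regardless of $M$ and $B$. This yields
\[
    I(N) \;\le\; I\bigl(\lceil N/5\rceil\bigr) + I\bigl(\lceil 7N/10\rceil\bigr) + c'\bigl(1 + N/B\bigr),
\]
together with the base case $I(N)=O(1)$ whenever $N\le B$, since a subproblem of that size fits in $O(1)$ blocks and can be held in cache for the remainder of its recursion tree. The total subproblem size per level shrinks by the factor $\tfrac{1}{5}+\tfrac{7}{10}=\tfrac{9}{10}$, so summing $N/B$ across levels produces a geometric series evaluating to $O(N/B)$; below the threshold, each recursive call contributes only $O(1)$ I/Os and any root-to-leaf path intersects $O(1)$ such calls.

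The point I would watch most carefully is not the algorithmic content but the cache-oblivious bookkeeping: since the algorithm never learns~$B$, I must argue the same analysis is valid for \emph{every} threshold $N\le B$ simultaneously. This is clean, because the accounting only uses the two general facts ``a sequential scan of an array of size $n$ costs $O(1+n/B)$'' and ``a subproblem of size at most $B$ resides entirely in cache''; neither requires the algorithm to detect its regime, and neither invokes the tall-cache assumption, so the bound holds obliviously for all~$B$ and~$M$.
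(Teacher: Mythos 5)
Your proposal follows essentially the same route as the paper, which does not actually prove this lemma: it imports the comparison bound from Blum \emph{et al.} and justifies the I/O bound only by a remark that the algorithm repeatedly performs sequential scans over geometrically decreasing subproblems laid out contiguously by a stack-oriented allocator. Your write-up is a correct expansion of exactly that remark, and the two load-bearing facts you isolate (a scan of $n$ contiguous elements costs $O(1+n/B)$ I/Os; a subproblem of size $O(B)$, together with its entire recursion subtree under stack allocation, stays resident and costs $O(1)$ I/Os) are the right ones; you are also right that no tall-cache assumption is needed here.

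One step of the accounting should be tightened. The phrase ``below the threshold, each recursive call contributes only $O(1)$ I/Os'' cannot be applied per call: the recursion tree has $\Theta(N)$ nodes in total, so charging $O(1)$ to each small call would give $O(N)$ rather than $O(1+N/B)$. The correct charge is $O(1)$ per \emph{maximal} subproblem of size at most $B$ (whose whole subtree then lives in cache, as you note in your base case), and what is missing is a count of these: every such maximal node is a child of a node of size exceeding $B$, and since the total size at recursion depth $j$ is at most $(9/10)^j N$, the number of nodes of size exceeding $B$ is $O(N/B)$; hence there are $O(1+N/B)$ maximal small subtrees, each costing $O(1)$ I/Os. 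Your remark that ``any root-to-leaf path intersects $O(1)$ such calls'' is true but does not by itself bound their total number. With that count supplied, the recurrence resolves to $O(1+N/B)$ as claimed and the proof is complete.
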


\begin{remark}[Median of medians: I/O cost]
Although the original paper by Blum \textit{et al}.~\cite{BFPRT73} predates the cache-oblivious model~\cite{FrigoLPR99} by decades, analyzing the algorithm in the cache-oblivious model with a stack-oriented memory allocator gives a linear I/O cost, since the algorithm is based on repeatedly scanning geometrically decreasing subproblems. 
\end{remark}

\begin{remark}[Median of medians: duplicates]
The original algorithm in~\cite{BFPRT73} assumes that all elements are distinct, but the algorithm can be extended to handle duplicates (by performing a three-way partition of the elements into those less-than, equal-to, and greater-than a pivot, respectively), and to return a triple $S_{\leq}, p, S_{\geq}$, that is a partition of~$S$, where $p$ is the element of rank~$k$, $S_{\leq}$ are the elements of rank $1,\ldots,k-1$ in arbitrary order, and $S_{\geq}$ are the elements of rank $k+1,\ldots,|S|$ in arbitrary order (where duplicate elements are assigned consecutive ranks in an arbitrary order).
\end{remark}

In the \emph{weighted selection} problem we are giving an array of $N$ elements, each with an associated non-negative weight, and a target weight $W$, where the goal is to return the $k$-th smallest element, for the smallest possible $k$, where the sum of the weights of the $k$ smallest elements at least $W$.
A linear-time weighted-selection algorithm can be derived from the unweighted algorithm by Blum \textit{et al}.~\cite{BFPRT73}  (\wref{lem:selection})~-- as hinted by Shamos in~\cite{Shamos76} and spelled out in detail by Bleich and Overton~\cite{BleichOverton83}~-- by computing the weighted rank of the pivot. The weighted selection algorithm follows essentially the same recursion as~\cite{BFPRT73}, and it similarly follows that it is cache oblivious and performs $O\bigl(1+\frac{N}{B}\bigr)$ I/Os. 

\begin{lemma}[Bleich, Overton~\cite{BleichOverton83}]
\label{lem:weighted-selection}
    \algname{Weighted selection} in an unsorted array of $N$ weighted elements can be done with $O(N)$ comparisons and $O\bigl(1+\frac{N}{B}\bigr)$ I/Os in the cache-oblivious model.
\end{lemma}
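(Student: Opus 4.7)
The plan is to mirror the median-of-medians algorithm from \wref{lem:selection}, exploiting the fact that the pivot's quality is measured by how it splits the \emph{count} of elements (not their weights), while only the branching decision of the recursion consults the weights. Concretely, I would proceed as follows.

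First, partition the $N$ input elements into $\lceil N/5\rceil$ groups of five, compute each group's median in $O(1)$ time, and recursively compute the unweighted median $p$ of these medians (invoking \wref{lem:selection}). Second, three-way partition the input into $S_<, S_=, S_>$ with respect to $p$ in a single scan, simultaneously accumulating $W_< = \sum_{x\in S_<} w(x)$, $W_= = \sum_{x\in S_=} w(x)$, and $W_> = \sum_{x\in S_>} w(x)$. Third, branch on the target weight $W$: if $W \leq W_<$, recurse on $(S_<, W)$; if $W_< < W \leq W_< + W_=$, return $p$; otherwise, recurse on $(S_>, W - W_< - W_=)$. Correctness of the branching is immediate from the definition of weighted rank: the prefix of elements in sorted order whose accumulated weight first reaches $W$ must lie in exactly one of the three regions.

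For the complexity, the BFPRT pivot-quality guarantee ensures that $\max(|S_<|, |S_>|) \leq 7N/10$ regardless of the weights, so each recursive invocation reduces the live element count by a constant factor. Together with the $N/5$-sized pivot-finding subproblem, the comparison cost satisfies $T(N) \leq T(N/5) + T(7N/10) + O(N)$, which solves to $T(N) = O(N)$. For the I/O cost, each step of a single invocation (grouping into fives, per-group medians, the three-way partition, and weight accumulation) is a single linear scan of the current contiguous subproblem, costing $O(1 + n/B)$ I/Os; with a stack-oriented allocator the subproblems remain contiguous, and summing over the geometrically shrinking chain of subproblem sizes gives a total of $O(1 + N/B)$ I/Os.

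The main obstacle to be wary of is the tension between balancing by count and recursing by weight: an adversary can concentrate almost all weight on one side of $p$, so one cannot hope for the \emph{weight} to decrease at each level. The saving grace is that we do not need this: the live \emph{set} still shrinks by a factor of at most $7/10$, which is all the recurrence requires. Duplicate elements and pivots with positive weight are handled uniformly by the three-way partition, where everything equal to~$p$ is grouped into $S_=$ so that the middle branch correctly returns $p$ on exactly the weight interval $(W_<, W_< + W_=]$.
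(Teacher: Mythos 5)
Your proposal is correct and follows essentially the same route the paper takes: it derives weighted selection from the BFPRT median-of-medians scheme by keeping the count-based pivot guarantee for the recursion while accumulating weights in the partitioning scan to decide the branch, exactly as the paper sketches via Shamos and Bleich--Overton. The I/O argument (linear scans over geometrically shrinking contiguous subproblems under a stack-oriented allocator) likewise matches the paper's remark on the cache-oblivious cost of the median-of-medians recursion.
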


\begin{lemma}[Frigo, Leiserson, Prokop, Ramanchandran~{\cite[Theorem~7]{FrigoLPR12}}, Brodal, Fagerberg~{\cite[Theorem~2]{BF02}}]
\label{lem:funnelsort}
    \algname{Funnelsort} sorts an array of $N$~elements using
    $O\bigl(\frac{N}{B}(1 + \log_M N)\bigr)$ I/Os in a cache-oblivious model with a tall-cache assumption $M \geq B^{1+\epsilon}$, for constant $\epsilon > 0$.
\end{lemma}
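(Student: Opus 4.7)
The plan is to follow the classical two-level recursion that defines \algname{funnelsort} and to bound the I/O cost via a careful space analysis of its $k$-merger data structure. Concretely, \algname{funnelsort} on $N$ elements splits the input into $N^{1/3}$ contiguous chunks of size $N^{2/3}$, sorts each chunk recursively, and then merges the $N^{1/3}$ sorted runs in one sweep using a single $N^{1/3}$-merger. The analysis therefore reduces to (i)~bounding the I/O cost of a $k$-merger that outputs $k^3$ elements and (ii)~solving the resulting top-level sorting recurrence.

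First I would define a $k$-merger recursively as a complete binary tree of binary mergers with buffers along its edges: a $k$-merger consists of a top $\sqrt{k}$-merger that reads from $\sqrt{k}$ buffers of size $\Theta(k^{3/2})$, each fed by an independent $\sqrt{k}$-merger applied recursively. The operational invariant is that a single invocation of a $k$-merger produces exactly $k^3$ output elements, and whenever an intermediate buffer falls below half full it is refilled by re-invoking the child. A straightforward induction shows that a $k$-merger occupies $O(k^2)$ space, and I would lay it out in memory in the van Emde Boas style, so that every sub-$k'$-merger, together with its attached buffers, is stored in a contiguous block of size $\Theta((k')^2)$.

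The main technical step is bounding I/Os per output element. Let $k^\ast$ be the largest $k'$ such that an entire $k'$-merger plus one block per leaf buffer fits in internal memory. From the $O((k')^2)$ space bound and the tall-cache assumption $M \geq B^{1+\epsilon}$ one verifies $k^\ast = \Theta(\sqrt{M})$, so that once loaded, a $k^\ast$-merger incurs no further cache misses on its internal state; producing its $(k^\ast)^3$ outputs then costs only $O((k^\ast)^3/B)$ I/Os, purely for scanning inputs and outputs. For $k > k^\ast$ I would charge the cost of producing $k^3$ outputs of a $k$-merger to invocations of $k^\ast$-submergers nested inside it; a level-by-level geometric summation (each level contributes $O(k^3/B)$ work) gives $O\bigl((k^3/B) \log_{M} k\bigr)$ I/Os per outer invocation.

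Plugging this back yields the recurrence $T(N) = N^{1/3} T(N^{2/3}) + O\bigl((N/B)\log_M N\bigr)$ for the I/O cost of the whole algorithm, whose solution is $O\bigl((N/B)(1 + \log_M N)\bigr)$, matching the stated bound (and equivalent to the familiar $O\bigl((N/B)\log_{M/B}(N/B)\bigr)$ under the tall-cache assumption). The main obstacle is the buffer-refill accounting inside the $k$-merger: one must argue that the schedule keeps a sub-merger of size $k'$ active long enough, producing $\Omega((k')^3)$ elements per visit, to amortize the I/Os spent loading its state into cache. This is exactly the trade-off engineered by the choice of buffer sizes $\Theta(k^{3/2})$ and the van Emde Boas layout, together with the tall-cache assumption, so most of the work in a rigorous write-up goes into formalizing this amortization argument.
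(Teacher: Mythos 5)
The paper does not prove this lemma at all: it is imported by citation, and the accompanying remark in the paper pinpoints exactly the issue on which your write-up founders. Your proposal is a reconstruction of the \emph{original} Frigo--Leiserson--Prokop--Ramachandran analysis (split into $N^{1/3}$ chunks, $k$-merger outputting $k^3$ elements, $O(k^2)$ space), and that argument is correct --- but only under the stronger tall-cache assumption $M = \Omega(B^2)$. The lemma as stated claims the bound for every constant $\epsilon>0$ in $M \geq B^{1+\epsilon}$, and your argument does not deliver that.

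The gap is concrete and sits in your key step ``from the $O((k')^2)$ space bound and the tall-cache assumption $M \geq B^{1+\epsilon}$ one verifies $k^\ast = \Theta(\sqrt{M})$, so that once loaded, a $k^\ast$-merger incurs no further cache misses.'' For the amortization to go through, the cache must simultaneously hold the $k^\ast$-merger's internal state, of size $\Theta((k^\ast)^2)$, \emph{and} one block for each of its $k^\ast$ input buffers, \ie, an additional $k^\ast \cdot B$ cells. With $k^\ast = \Theta(\sqrt{M})$ this extra term is $\Theta(\sqrt{M}\,B)$, which exceeds $M$ whenever $B > \sqrt{M}$, \ie, whenever $M < B^2$ --- precisely the regime $\epsilon < 1$ that the lemma must cover. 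Without all input blocks resident, the merger thrashes as it round-robins over its inputs and the $O((k^\ast)^3/B)$ charge fails. Repairing this is exactly the content of the Brodal--Fagerberg modification cited in the lemma: one generalizes the funnel so that a $k$-merger outputs $k^d$ elements and occupies $O\bigl(k^{(d+1)/2}\bigr)$ space for a parameter $d \geq \max\{1+2/\epsilon, 2\}$ (the same parameter $d$ that governs the $k$-partitioner in \wref{lem:k-partitioner} of this paper), which shrinks the fan-in of the largest memory-resident merger enough that $k^\ast B \leq M$ holds under $M \geq B^{1+\epsilon}$. The top-level split correspondingly becomes $N^{1/d}$ chunks of size $N^{1-1/d}$. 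Your recurrence, layout, and amortization scheme all survive this change essentially verbatim, so the fix is local, but as written the proof only establishes the lemma for $\epsilon \geq 1$. (Two smaller omissions: the solution of the recurrence needs the base case $N = O(M)$ costing $O(1 + N/B)$ I/Os, and the additive $O(k)$ term for touching each buffer at least once must be accounted for; both are routine.)
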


\begin{remark}[Tall and taller]
The original description of funnelsort by Frigo \textit{et al}.~\cite{FrigoLPR99} assumed the tall cache assumption $M=\Omega(B^2)$, whereas~\cite{BF02} observed that this could be relaxed to the weaker tall cache assumption $M=\Omega\bigl(B^{1+\epsilon}\bigr)$.
I/O optimality of funnelsort follows from a matching external-memory lower bound by Aggarwal and Vitter~\cite[Theorem~3.1]{AggarwalVitter88}.
\end{remark}

The key innovation in our previous randomized algorithm \algname{funnelselect}~\cite{BrodalWild23} is the \emph{$k$-partitioner} (\wref{fig:k-partitioner}), a cache-oblivious and I/O-efficient multi-way partitioning algorithm to distribute a batch of elements around $k-1$ given pivots into $k$ buckets; the precise characteristics are summarized in the following lemma.

\begin{figure}[tbh]
	\centering
		\newcommand{\PARTITIONER}[3]{
			\node[draw,%
					circle,minimum size=5.5mm,inner sep=0pt,
					fill=red!50!orange!80!black!20] (N#1) at (#2) {\scalebox{.7}{$P_{#1}$}};
			\path[draw,latex-] (N#1) ++(-#3, -1) -- (N#1) ;
			\path[draw,latex-] (N#1) ++(#3, -1) -- (N#1);
			\path[draw,latex-] (N#1) -- ++(0, 1);		
		}	
		\newcommand{\BUFFER}[4]{
			\fill[white] (#1) rectangle (0.5+#1+#2);
			\fill[blue!50!black!40] (#1+#3) rectangle (0.5+#1+#4);
			\draw[draw=black] (#1) rectangle (0.5+#1+#2);
		}
		\colorlet{ckcolor}{green!45!white}
		\colorlet{cgcolor}{teal!50!blue!20!white}
		\colorlet{darkred}{red!40!black}
		\colorlet{blackred}{black!50!darkred}
		\colorlet{lightred}{red!80!green!35!white}
		\colorlet{mediumred}{red!40!gray}
		\begin{tikzpicture}[xscale=0.65,yscale=.6,thick]
		\large
			{
			\begin{scope}[ckcolor!50!black,very thick,fill=ckcolor!50!black!10,dashed,rounded corners=10pt]%
			\foreach \x in {1,5,9,13} {
				\begin{scope}[shift={(\x,0)}]
					\filldraw
						(-2,-.65) -- (2,-.65) -- (1.5,2) -- (0,3.35) 
						-- (-1.5,2) -- cycle
					;
				\end{scope}
			}
			\begin{scope}[shift={(7,5.5)}]
				\filldraw
					(-6,-.65) -- (6,-.65) -- (4.5,2) -- (0,3.35) 
					-- (-4.5,2) -- cycle
				;
			\end{scope}
			\end{scope}
			}
			\foreach \i in {1, 3,...,16}	{ \PARTITIONER{\i}{\i-1,0}{0.5} }
			\foreach \i in {2, 6, 10, 14} { \PARTITIONER{\i}{\i-1,2.5}{1.0} }
			\foreach \i in {4,12}				 { \PARTITIONER{\i}{\i-1,5.5}{2} }
			\foreach \i in {8}						{ \PARTITIONER{\i}{\i-1,8}{4} }
			\foreach \i/\t in {0/1,1/1.5,2/0.9,3/1.2,4/1.5,5/0.5,6/0.7,7/1.3,8/2,9/1.2,10/1.6,11/0.6,12/0.3,13/0.8,14/1.4,15/1.1}
			{ \BUFFER{\i-0.75,-3}{2}{0}{\t} }
			\foreach \i/\b/\t in {0/0/0.5, 1/0/0.1, 2/0/0.3, 3/0/0.2, 4/0/0.3, 5/0/0.15, 6/0/0.25, 7/0/0.35} 
			{ \BUFFER{2*\i-0.25,1}{0.5}{\b}{\t} }
			\foreach \i/\b/\t in {0/0/0.4, 1/0.6/1, 2/0/0.25, 3/0/0.75}
			{ \BUFFER{4*\i+0.75,3.5}{1}{\b}{\t} }
			\foreach \i/\b/\t in {0/0.25/0.5,1/0.0/0.25} 
			{ \BUFFER{8*\i+2.75,6.5}{0.5}{\b}{\t} }
			\BUFFER{6.75,9}{3}{1}{3}
			\node[left,font=\itshape,blue!50!black] at (11, 10.5) {input array};
			\node[left,align=center,font=\itshape,blue!50!black] at (17.5, -2) {output\\ buckets};
			{
					\node[left,overlay,blue!50!black] at (15.5, 4) {$k^{d/2}$};
					\node[left,overlay,blue!50!black] at (14.25, 6.75) {$k^{d/4}$};
					\node[left,overlay,blue!50!black] at (17., 1.25) {$k^{d/4}$};
				\node[black!50,ckcolor!50!black,font=\itshape,overlay] at (-1,7.0) {$\sqrt k$-partitioner};
			 	\node[black!50,ckcolor!50!black,font=\itshape,overlay] at (-3.1,1.5) {$\sqrt k$-partitioners};
				{\node[blue!50!black] at (-1.5, 4) {\textit{middle buffers}};}
			}
			\draw[->,blue!50!black!10,line width=1.5em]	(17.25,11.5) -- ++(0,-9);
		\end{tikzpicture}
	
	\caption{A $k$-partitioner for $k=16$ buckets. Content in the buffers is shaded;
		buffers are filled bottom-to-top; when full, they are flushed and then consumed from the bottom.
		The figure shows the situation where the input buffer for $P_6$ is being flushed down to its children (by partitioning elements around pivot $P_6$). The flush at $P_6$ was triggered during flushing $P_4$'s input buffer, which in turn has been called while flushing $P_8$ (the input). 
		\protect\\
		Buffer sizes for the three internal levels are shown next to the buffers.
		$k$-partitioners are defined recursively from a $\sqrt k$-partitioner at the top, a collection of $\sqrt k$ middle buffers, and $\sqrt k$ further $\sqrt k$-partitioners, each partitioning from one middle buffer 
		to $\sqrt k$ output buffers.
		(All sizes here ignore floors and ceilings; for the precise definition valid for all $k$, see~\cite{BrodalWild23}.)
		}
	\label{fig:k-partitioner}
\end{figure}

\begin{lemma}[Brodal and Wild~{\cite[Lemma~3]{BrodalWild23}}]
\label{lem:k-partitioner}
		Given an unsorted array of $N\geq k^d$ elements and $k-1$ pivots $P_1	\leq \cdots \leq P_{k-1}$, a $k$-partitioner can partition the elements into $k$ buckets $S_1,\ldots,S_k$, such all elements $x$ in bucket $S_i$ satisfy $P_{i-1} \leq x \leq P_i$. The algorithm is cache-oblivious and performs $O(N\lg k)$ comparisons and $O\bigl(k + \frac{N}{B}(1 + \log_M k)\bigr)$ I/Os, provided a tall-cache assumption $M \geq B^{1+\epsilon}$ and $d \geq \max\{ 1+2/\epsilon, 2\}$. 
    The working space for the $k$-partitioner (ignoring input and output buffers) is $O\bigl(k^{(d+1)/2}\bigr)$.
    This is also the time required to construct a $k$-partitioner (again ignoring input and output buffers).
\end{lemma}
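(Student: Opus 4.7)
The plan is to mirror the classical cache-oblivious analysis of the $k$-merger of \algname{funnelsort}, reversing the direction of data flow from bottom-up merging to top-down partitioning. The $k$-partitioner is constructed recursively as a balanced binary tree of $k-1$ binary partitioners, stored according to the van Emde Boas recursive layout depicted in \wref{fig:k-partitioner}: a top $\sqrt k$-partitioner, $\sqrt k$ middle buffers of size $\Theta(k^{(d+1)/2})$, and $\sqrt k$ further $\sqrt k$-partitioners. A binary partitioner node compares an input element against its pivot and writes it into one of its two output buffers; when an output buffer becomes full, the node recursively calls its child to flush that buffer downward. Given this structure, the three bounds (comparisons, I/Os, working space) can each be attacked by a separate recursive argument.

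The comparison count is the easiest step. Each of the $N$ input elements enters at the root and follows a single root-to-leaf path, performing exactly one comparison against a pivot at each internal node. The tree has $\lceil \log_2 k\rceil$ levels, so the total number of comparisons is $O(N \log k)$.

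The I/O bound is the main obstacle and follows the standard ``critical level'' argument from \algname{funnelsort}. First I would identify the largest $k^*$ for which a $k^*$-subpartitioner together with its immediate input and output boundary buffers occupies at most a constant fraction of the cache; since the working space recurrence $S(k) = (\sqrt k + 1)\,S(\sqrt k) + O(k^{(d+1)/2})$ solves to $S(k) = O(k^{(d+1)/2})$, one obtains $k^* = \Theta\bigl(M^{2/(d+1)}\bigr)$. Below the critical level no additional I/Os are charged once the subpartitioner has been brought into cache. Above it, I/Os are charged only when a middle buffer is flushed; by the choice of buffer sizes and the tall-cache assumption $M\geq B^{1+\epsilon}$ together with $d \geq 1 + 2/\epsilon$, every such buffer has size at least $B$, so flushing a buffer of size $s$ costs $O(s/B)$ I/Os and the one-time loading cost is absorbed. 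Amortizing over the elements moved contributes $O(1/B)$ I/Os per element per ``super-level'', and there are $O(\log k / \log k^*) = O(\log_M k)$ such super-levels, giving a total of $O\bigl(\frac{N}{B}\log_M k\bigr)$. The additive $O(k)$ term accounts for first touches of the $k$ output buckets and for initializing the $k$-partitioner structure itself.

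The working-space and construction-time claim follows from the same recurrence $S(k) = (\sqrt k + 1)S(\sqrt k) + O(k^{(d+1)/2}) = O(k^{(d+1)/2})$, since construction merely lays out the contiguous memory region and initializes pointers to buffers and children. The delicate part of the whole argument, and the step I expect to require the most care, is the verification that the parameters mesh correctly at the critical level: one must confirm that the subpartitioner plus its boundary buffers genuinely fits in $M$, that middle buffers above the critical level exceed $B$ (which is exactly where the tall-cache hypothesis and the lower bound $d \geq 1 + 2/\epsilon$ are consumed), and that a subpartitioner below the critical level, once loaded, is not evicted and reloaded between successive flushes. Everything else is routine bookkeeping on top of the recursive layout.
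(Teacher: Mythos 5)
Your sketch is correct, but note that the present paper gives no proof of this lemma at all---it is imported verbatim from Brodal and Wild~\cite{BrodalWild23}, and your argument (reversing the \algname{funnelsort} $k$-merger, one comparison per level for the comparison count, the space recurrence $S(k)=(\sqrt k+1)S(\sqrt k)+O\bigl(k^{(d+1)/2}\bigr)$, and the critical-level amortization with $k^*=\Theta\bigl(M^{2/(d+1)}\bigr)$ and buffers above it exceeding $B$ via $M\geq B^{1+\epsilon}$ and $d\geq 1+2/\epsilon$) is essentially the proof given there. One wording slip: each of the $\sqrt k$ middle buffers has size $\Theta\bigl(k^{d/2}\bigr)$, so that their \emph{total} is the $\Theta\bigl(k^{(d+1)/2}\bigr)$ term appearing in your (otherwise correct) recurrence.
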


The $k$-partitioners are structurally similar to the $k$-mergers from \algname{funnelsort} for merging $k$ runs cache obliviously. In~\cite{BrodalWild23} we pipeline the partitioning by essentially reversing the computations done by \algname{funnelsort}, and replace each binary merging node by a binary partitioning node.


\section{Deterministic multiple-selection}

In this section we present our deterministic cache-oblivious multiple-selection algorithm that performs optimal
$O(\Bentropy + N)$ comparisons and $O\bigl(\Bio + \frac{N}{B}\bigr)$ I/Os,
under a tall-cache assumption $M \geq B^{1 + \epsilon}$.
Detailed pseudo-code is given in \wref{alg:deterministic-multiple-selection} and \wref{alg:multi-partition}, and the basic idea is illustrated in \wref{fig:deterministic-multiple-selection}.

\begin{figure}[htb]
  \centering
  \begin{tikzpicture}[scale=0.195]
    \tikzstyle{every node}=[font=\small]
    \foreach \l/\r in { -4/11,  12/29, 30/47, 48/64 }    
        \draw[draw=black] (\l, 0) rectangle (\r, -1.5);
    \foreach \idx/\x in { 1/12, 2/30, 3/48 } {
        \draw[draw=black, fill=black] (\x-0.8, -0.2) rectangle (\x-0.2, -1.3);
        \node [anchor=north] at (\x-0.5, -1.5) {$P_{\idx}$};
    }
    \foreach \x in { 14, 27, 51, 62 }
        \draw[draw=black, fill=black] (\x-0.8, -0.3) rectangle (\x-0.2, -1.2);
    \foreach \idx/\x in { 1/4, 2/21, 3/38, 4/56 } {
        \node [anchor=north] at (\x-0.5, -1.5) {$S_{\idx}$};
    }
    \foreach \idx/\rank in { 0/-5, 1/13, 2/17, 3/22, 4/26, 5/47, 6/50, 7/55, 8/61, q+1/64 } {
       \draw [-latex, thick] (\rank+0.5,1.5) -- (\rank+0.5,0);
       \node[anchor=south, inner sep=1pt] at (\rank+0.5,1.8) {\smash{$r_{\idx}$}};
    }
    \foreach \x/\rank in { 13/r_2^{\min}, 26/r_2^{\max}, 50/r_4^{\min}, 61/r_4^{\max} } {
        \draw [-latex, thick] (\x+0.5,-3.2) -- (\x+0.5,-1.6);
        \node [anchor=north] at (\x+1.5, -4.2) {\smash{$\rank$}};
    }
    \foreach \l/\r in { 14/26, 51/61}
        \draw [draw=none, pattern=north east lines] (\l, -0.3) rectangle (\r, -1.2);
    \foreach \idx/\l/\r in { 1/-4/14, 2/14/18, 3/18/23, 4/23/27, 5/27/48, 6/48/51, 7/51/56, 8/56/62, 9/62/65 }
        \draw [decorate, decoration = {calligraphic brace,amplitude=5pt},line width=1.25pt] (\l+0.1,4) -- node [above=5pt, pos=0.5] {$\Delta_\idx$}  (\r-0.1, 4);
  \end{tikzpicture}
  \caption{Deterministic multiple selection. The partition of an array~$S$ into buckets $S_1,\ldots,S_4$ separated by pivots $P_1,\ldots,P_3$, and query ranks $r_1,\ldots,r_8$. In the example the maximum allowed bucket size is $\Delta=\Delta_1$, since 
  $\Delta_1+\Delta_2+\Delta_3+\Delta_4+\Delta_6+\Delta_7+\Delta_8+\Delta_9 \geq |S|/2+1$
  and $\Delta_2+\Delta_3+\Delta_4+\Delta_6+\Delta_7+\Delta_8+\Delta_9 < |S|/2+1$.
  Black squares are pivots and the shaded regions in buckets are the subproblems to recurse on.}
  \label{fig:deterministic-multiple-selection}
\end{figure}
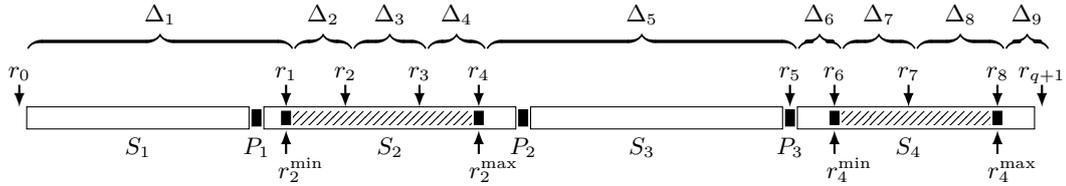

Given a tall-cache assumption $M \geq B^{1+\epsilon}$,
we let $d = \max\{1+2/\epsilon,2\}$. The algorithm follows the general idea of making a recursive multi-way partition of the array 
of elements and to only recurse on subproblems with unresolved rank queries. 
For two consecutive query ranks $r_{i-1}$ and $r_i$, we say that the $\Delta_i=r_i-r_{i-1}$ elements of rank $r_{i-1}+1,\ldots,r_i$ are in a \emph{gap} of size $\Delta_i$.
We choose a parameter $\Delta$, such that at least half of the elements are in gaps of size~$\leq \Delta$ and simultaneously at least half (rounded down) of the elements are in gaps of size~$\geq \Delta$. 
To compute $\Delta$ (\wref{alg:deterministic-multiple-selection}, line~4), we compute $\Delta_i = r_i-r_{i-1}$ by a scan over the query ranks $r_1,\ldots,r_q$ (and $r_0=0$ and $r_{q+1}=N+1$), and perform \algname{weighted selection} (\wref{lem:weighted-selection}) among $\Delta_1,\ldots,\Delta_{q+1}$, where $\Delta_i$ has weight~$w_i=\Delta_i$, and return the smallest $\Delta$ where $\sum_{\Delta_i \leq \Delta}  w_i \geq N/2+1$.

\begin{algorithm}[htp]
  \begin{algorithmic}[1]
    \Procedure{DeterministicFunnelselect}{$S[1..N]$, $R[1..q]$}
    \If{$q > 0$}
        \State $\Delta_i \gets R[i] - R[i-1]$ for $i=1,\ldots,q+1$, assuming $R[0]=0$ and $R[q+1]=N+1$
        \State $\Delta \gets \min \bigl\{
                \Delta_i \in \{\Delta_1,\ldots,\Delta_{q+1}\} 
                \bigm| \sum_{j \in \{1,\ldots,{q+1}\} : \Delta_j \leq \Delta_i} \Delta_j \geq N/2 + 1
            \bigr\}$
        \If{$(2N)^d \geq \Delta^{d + 1}$ \textbf{or} $N^{1+\frac{1}{1+\epsilon}} \geq \Delta^2$}
          \Comment{$\Bio = \Omega(\mathrm{Sort}_{M,B}(N))$} 
          \State $S$ $\gets$ \textsc{Funnelsort}$(S)$
          \State Report $S[R[1]], \ldots, S[R[q]]$
        \Else
        \State $(P_1,\ldots,P_{k-1}), (S_1,\ldots,S_k)$ $\gets$ \textsc{MultiPartition}$(S, \Delta)$
          \State $\bar{r}_0 \gets 0$ 
          \For{$i\gets 1,\ldots,k$}
            \State $\bar{r}_i \gets \bar{r}_{i-1} + |S_i| + 1$ \Comment{$\bar{r}_i$ is rank of $P_i$}
            \State $R_i \gets \{r \mid r \in R \;\wedge\; \bar{r}_{i-1} < r < \bar{r}_i\}$ \Comment{Rank queries to bucket $S_i$}
            \If{$|R_i| > 0$}
                \State $r_i^{\max} \gets \max(R_i)$
                \State $\bar{S}_i, p_{\max}, S_{\geq}  \gets \textsc{Select}(S_i, r_i^{\max} - \bar{r}_{i-1})$
                \If{$|R_i| > 1$}
                  \State $r_i^{\min} \gets \min(R_i)$
                    \State $S_{\leq}, p_{\min}, \bar{S}_i  \gets \textsc{Select}(\bar{S}_i, r_i^{\min} - \bar{r}_{i-1})$
                    \State Report $p_   {\min}$
                    \If{$|R_i| > 2$}
                        \State $\bar{R}_i \gets \{r - r_i^{\min} \mid r \in R_i \setminus \{r_i^{\min}, r_i^{\max}\}\}$
                        \State \textsc{DeterministicFunnelselect}$(\bar{S}_i, \bar{R}_i)$
                    \EndIf
                \EndIf
                \State Report $p_{\max}$
            \EndIf
            \If{$r_i \in R$}
              \State Report $P_i$
            \EndIf
          \EndFor
        \EndIf    
      \EndIf
    \EndProcedure
  \end{algorithmic}
  \caption{Deterministic cache-oblivious multiple-selection.}
  \label{alg:deterministic-multiple-selection}
\end{algorithm}
\begin{algorithm}[htp]
  \begin{algorithmic}[1]
    \Procedure{MultiPartition}{$S[1..N]$, $\Delta$}
      \State \textit{Requires $(2N)^{\frac{d}{d+1}} \leq \Delta \leq N$}
      \State $k \gets 1$, $S_1 \gets \{\}$ \Comment{Initially only one empty bucket and no pivots}
      \For{$i\gets1$ \textbf{to} $N$ \textbf{step} $\Delta$}
          \State $\bar{S} \gets S[i..\min(i + \Delta - 1, N)]$ \Comment{Next batch to distribute to buckets}
          \State Distribute $\bar{S}$ to buckets $S_1, \ldots S_k$ using pivots $P_1,\ldots,P_{k-1}$ with a $k$-partitioner
          \While{there exists a bucket $S_j$ with $|S_j| > \Delta$} \Comment{Split bucket $S_j$}
                \State $S_{\leq}, p, S_{\geq} \gets \textsc{Select}(S_j, \lceil |S_j| / 2 \rceil)$
                \State Rename $S_{j+1},\ldots,S_k$ to $S_{j+2},\ldots,S_{k+1}$ and $P_j,\ldots,P_{k-1}$ to $P_{j+1},\ldots,P_k$
                \State $S_j \gets S_{\leq}$, $P_j \gets p$, $S_{j+1} \gets S_{\geq}$
                \State $k \gets k + 1$
          \EndWhile
      \EndFor
      \State \Return $(P_1,\ldots,P_{k-1}), (S_1,\ldots,S_k)$
    \EndProcedure
  \end{algorithmic}
  \caption{Given an array $S$ with $N$ elements and a bucket capacity~$\Delta$, where $(2N)^{\frac{d}{d+1}} \leq \Delta \leq N$, partition~$S$ into $k$ buckets $S_1,\ldots,S_k$ separated by $k-1$ pivots $P_1,\ldots,P_{k-1}$, where $\bigl\lfloor \frac{\Delta}{2}\bigr\rfloor \leq |S_i| \leq \Delta$.}
  \label{alg:multi-partition}
\end{algorithm}

For the case when $\Delta$ is small compared to $N$ (formally, $(2N)^d \geq \Delta^{d + 1}$ or $N^{1+\frac{1}{1+\epsilon}} \geq \Delta^2$), we simply solve the multiple-selection problem by sorting the elements (cache-obliviously using \algname{funnelsort}~\cite{FrigoLPR12}), and report the elements with ranks~$r_1,\ldots,r_q$ by a single scan over the sorted elements. The condition on $\Delta$ implies $\Bio = \Omega(\mathrm{Sort}_{M,B}(N))$, 
where $\mathrm{Sort}_{M,B}(N)=\Theta\bigl(\frac{N}{B}\bigl(1+\log_{M/B} \frac{N}{B}\bigr)\bigr)$ is the number of I/Os required to sort $N$ elements in external memory~\cite{AggarwalVitter88}, 
so this is within a constant factor of the I/O lower bound (detailed analysis in Section~\ref{sec:analysis}).

Otherwise, we create a $k$-partition, where $k=\Theta\bigl(\frac{N}{\Delta}\bigr)$ as follows (\textsc{Multi\-Partition} in \wref{alg:multi-partition}): We repeatedly distribute batches of $\Delta$ elements into a set of buckets separated by pivot elements. Initially we have one empty bucket and no pivot. 
Whenever a bucket reaches size $>\Delta$, the bucket is split into two buckets of size $\leq \Delta$ separated by a new pivot using the (cache-oblivious) linear-time median selection algorithm (\wref{lem:selection}). 
To distribute a batch of elements into the current set of buckets we use a cache-oblivious $k$-partitioner (\wref{lem:k-partitioner}, which depends on the tall-cache assumption parameter~$d$) built using the current set of pivots. 
Note that we need to construct a new $k$-partitioner after each batch of $\Delta$ elements has been distributed, since the number of buckets and pivots can increase. 
For the computation to be I/O efficient, we allocate in memory space for a $\bigl\lfloor\frac{2N}{\Delta}\bigr\rfloor$-partitioner 
followed by space for $\bigl\lfloor\frac{2N}{\Delta}\bigr\rfloor$ buckets of capacity $2\Delta$
(in the proof of \wref{lem:multi-partition} we argue that the number of buckets created is at most $\frac{2N}{\Delta}$ and each bucket will never exceed $2\Delta$ elements). The space for the partitioner is reused for each new batch, and whenever a bucket is split into two new buckets, one bucket remains in the old bucket's allocated space and the other bucket is placed in next available slot for a bucket. This ensures all buckets are stored consecutively in memory, albeit in arbitrary order. 

After having constructed the buckets we compute the ranks of the pivots from the bucket sizes, and consider the gaps with at least one unresolved rank query.  
If the rank of a pivot coincides with a query rank, we report this pivot just after having considered the preceding bucket. Before recursing on the elements in a bucket, we first find the minimum and maximum query ranks $r^{\min}$ and $r^{\max}$ in the bucket by a scan over the bucket's query ranks, and find and report the corresponding elements in the bucket using linear-time selection (\wref{lem:selection}). Finally, we only recurse on the elements between ranks $r^{\min}$ and $r^{\max}$, provided there are any unresolved rank queries to the bucket. 
This ensures that when recursing on a subproblem of size $\bar{N}$, all elements in the subproblem are in gaps of size~$<\bar{N}$ in the original input.
By reporting the elements at the appropriate times during the recursion, elements will be reported in increasing order. 

The partitioning of an array~$S$ into buckets is illustrated in \wref{fig:deterministic-multiple-selection}. The crucial property is that for a gap $\Delta_i\geq \Delta$, the two query ranks~$r_{i-1}$ and $r_{i}$ defining the gap \emph{cannot} be in the same bucket, implying that no element in this gap will be part of a recursive subproblem (see, \eg, gaps $\Delta_1$ and~$\Delta_5$ in \wref{fig:deterministic-multiple-selection}).

Pseudocode for our algorithm is shown in \wref{alg:deterministic-multiple-selection} and \wref{alg:multi-partition}. 
We assume \textsc{Select}$(S, k)$ is the deterministic linear-time selection algorithm from \wref{lem:selection}, and that it returns a triple $S_{\leq}, p, S_{\geq}$, that is a partition of $S$, where $p$ is the element of rank $k$, $S_{\leq}$ are the elements of rank $1,\ldots,k-1$ in arbitrary order, and $S_{\geq}$ the elements of rank $k+1,\ldots,|S|$ in arbitrary order.


\section{Analysis}
\label{sec:analysis}

We first analyze the number of comparisons and I/Os performed by \textsc{MultiPartition} in \wref{alg:multi-partition},
that deterministically performs a $k$-way partition of $N$ elements into $k=O\bigl(\frac{N}{\Delta}\bigr)$ buckets separated by $k-1$ pivots, where each bucket has size at most $\Delta$. 
The following lemma summarizes the precise properties of \textsc{MultiPartition}.

\begin{lemma}
    \label{lem:multi-partition}
    For $N \geq \Delta$ and $\Delta^{d+1} \geq (2N)^d$, \textup{\textsc{MultiPartition}} creates $k\leq \frac{2N}{\Delta}$ buckets and $k-1$ pivots, each bucket has size at most $\Delta$, and performs $O(N\lg k)$ comparisons and $O\bigl(k^2 + \frac{N}{B}(1 + \log_{M} k)\bigr)$ I/Os.
\end{lemma}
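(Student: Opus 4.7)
The plan is to maintain three invariants through the algorithm and then assemble the cost bounds by invoking \wref{lem:selection} and \wref{lem:k-partitioner}. The invariants I would track are: (i)~at the end of each outer \textbf{for}-iteration every bucket has at most~$\Delta$ elements; (ii)~during the distribution of a batch (before the \textbf{while}-loop splits) every bucket has at most~$2\Delta$ elements; and (iii)~every bucket produced by a split contains at least $\lfloor\Delta/2\rfloor$ elements at the moment of creation. Invariant~(ii) follows inductively because a batch adds at most $\Delta$ elements to each bucket, while each bucket starts with $\leq\Delta$ by~(i). Invariant~(i) is then immediate since the \textbf{while}-loop splits every oversized bucket, and splitting a multiset of size at most $2\Delta$ at its median produces two halves each of size at most $\Delta$ (no cascading splits are needed). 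Invariant~(iii) follows because a split is only triggered when $|S_j|>\Delta$, so the halves each have size at least $\lfloor\Delta/2\rfloor$.

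Next, I would use (i) and (iii) to bound $k$. Since $k-1$ pivots are removed from the buckets, the buckets jointly hold $N-(k-1)$ elements; once $k\geq 2$, every current bucket is the result of some split (even the one still named $S_1$ was replaced by the ``$\leq$''-half of its parent's split), so $k\lfloor\Delta/2\rfloor\leq N$ and therefore $k\leq 2N/\Delta$. Conversely, $k\Delta\geq N-(k-1)$ yields $k\geq N/(2\Delta)$, i.e.\ $k=\Omega(N/\Delta)$, which I will need for the I/O bookkeeping. Combined with the hypothesis $\Delta^{d+1}\geq(2N)^d$, the upper bound gives $k^d\leq(2N/\Delta)^d\leq \Delta$, so every call to the $k$-partitioner on a batch of exactly $\Delta$ elements satisfies the precondition $\Delta\geq k^d$ required by \wref{lem:k-partitioner}.

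Summing costs is then routine. The outer loop runs $\lceil N/\Delta\rceil$ times and each iteration pays (a)~$O(\Delta\lg k)$ comparisons and $O\bigl(k+\frac{\Delta}{B}(1+\log_M k)\bigr)$ I/Os for one $k$-partitioner call by \wref{lem:k-partitioner}, plus (b)~$O(k^{(d+1)/2})$ time to construct the partitioner, which is absorbed into $O(\Delta)$ time and $O(1+\Delta/B)$ I/Os because $k^{(d+1)/2}\leq k^d\leq\Delta$. Splits by \textsc{Select} cost $O(\Delta)$ comparisons and $O(1+\Delta/B)$ I/Os each by \wref{lem:selection}; the total number of splits over the whole execution is exactly $k-1$ (one per added bucket), contributing $O(N)$ comparisons and $O(k+N/B)$ I/Os. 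Summing gives the claimed $O(N\lg k)$ comparisons, while for the I/Os I use $\frac{N}{\Delta}\cdot k=O(k^2)$ (from $k=\Omega(N/\Delta)$) to obtain $O\bigl(k^2+\frac{N}{B}(1+\log_M k)\bigr)$.

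The main obstacle is verifying the $k$-partitioner's precondition $\Delta\geq k^d$ \emph{for every batch}: because $k$ grows monotonically with batches, it suffices to certify the bound for the final~$k$, and the hypothesis $\Delta^{d+1}\geq(2N)^d$ is exactly calibrated so that the upper bound $k\leq 2N/\Delta$ implies $k^d\leq\Delta$ throughout. A secondary subtlety, already addressed by the pre-allocation strategy described before the lemma, is that invariant~(ii) allows a bucket to temporarily reach size~$2\Delta$ during distribution, so the storage reserved per bucket must be $2\Delta$ rather than $\Delta$; this does not affect the asymptotic comparison or I/O counts.
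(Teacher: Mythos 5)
Your proof is correct and follows essentially the same route as the paper's: the same bucket-size invariants (at most $\Delta$ after splitting, at most $2\Delta$ during distribution, at least $\lfloor\Delta/2\rfloor$ at creation), the same use of the minimum bucket size to bound $k$, and the same cost accounting via \wref{lem:selection} and \wref{lem:k-partitioner}, including verifying $\Delta\geq k^d$ from $\Delta^{d+1}\geq(2N)^d$ for the final (hence every intermediate) $k$. One small elision: the step ``$k\lfloor\Delta/2\rfloor\leq N$, therefore $k\leq 2N/\Delta$'' is not literally valid for odd $\Delta$; you need the $k-1$ pivots you already set aside, \ie, $k\lfloor\Delta/2\rfloor + k - 1\leq N$, which for $k\geq 2$ gives $k\Delta/2\leq N$ exactly as in the paper's proof.
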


\begin{proof}
    We first bound the sizes of the buckets created by \textsc{MultiPartition}. The algorithm repeatedly distributes batches of at most $\Delta$ elements to buckets and splits all overflowing buckets of size $>\Delta$ before considering the next batch. 
    It is an invariant that before distributing a batch, all buckets have size at most~$\Delta$. 
    Furthermore, as soon as the first bucket is split, all buckets have size at least $\bigl\lfloor \frac{\Delta}{2} \bigr\rfloor$, since whenever an overflowing bucket of size $s>\Delta$ is split the new buckets have initial sizes $\bigl\lfloor \frac{s-1}{2} \bigr\rfloor$ and $\bigl\lceil \frac{s-1}{2} \bigr\rceil$. Here ``$-1$'' is due to one element becomes a pivot. 
    The smallest bucket size is achieved when $s=\Delta+1$, where the smallest bucket size is $\bigl\lfloor \frac{\Delta+1-1}{2} \bigr\rfloor= \bigl\lfloor \frac{\Delta}{2} \bigr\rfloor$. 
    Note that the buckets after the split have size at most~$\Delta$, since all buckets had at most $\Delta$ elements before the distribution of a batch of at most $\Delta$ elements to the buckets, \ie, $s \leq 2\Delta$.
    To bound the total number of buckets~$k$ created, observe that if $\Delta = N$ then no bucket will be split and
    $k=1$. Otherwise, $\Delta < N$ and at least two buckets are created, and   
    $k \bigl\lfloor \frac{\Delta}{2} \bigr\rfloor + k-1 \leq N$, since all buckets have size at least $\bigl\lfloor \frac{\Delta}{2} \bigr\rfloor$ and there are $k-1$ pivots.
    We have $N \geq k \bigl(\frac{\Delta}{2} - \frac{1}{2}\bigr) + k - 1 = \frac{k\Delta}{2} + \frac{k}{2} - 1 \geq \frac{k\Delta}{2}$, since $k \geq 2$, \ie, the total number of buckets created $k \leq \frac{2N}{\Delta}$.

    To analyze the number of comparisons and I/Os performed, we need to consider the $\bigl\lceil \frac{N}{\Delta} \bigr\rceil$ distribution steps and at most $\frac{2N}{\Delta} - 1$ bucket splittings. Since each bucket splitting involves at most $2\Delta$ elements, each bucket splitting can be performed cache-obliviously by a linear-time selection algorithm (\wref{lem:selection}) using $O(\Delta)$ comparisons and $O\bigl(1 + \frac{\Delta}{B}\bigr)$ I/Os, assuming each bucket is stored in a buffer of $2\Delta$ consecutive memory cells. In total the $k - 1 = \Theta\bigl(\frac{N}{\Delta}\bigr)$ bucket splittings require $O(N)$ comparisons and $O\bigl(k + \frac{N}{B}\bigr)$ I/Os. 
    A $k$-partitioner for partitioning $\Delta$ elements uses $O(\Delta\lg k)$ comparisons and $O\bigl(k+\frac{\Delta}{B} (1 + \log_M k)\bigr)$ I/Os (\wref{lem:k-partitioner}), assuming $k$ is sufficiently small according to the tall-cache assumption (see below).  This includes the cost of constructing the $k$-partitioner.
    The total cost for all $\bigl\lceil\frac{N}{\Delta}\bigr\rceil$ distribution steps becomes
    $O(N\lg k)$ comparisons and $O\bigl(k\frac{N}{\Delta} + \frac{N}{B}(1+\log_M k)\bigr)=O\bigl(k^2 + \frac{N}{B}(1+\log_M k)\bigr)$ I/Os.

    By \wref{lem:k-partitioner}, the tall-cache assumption $M \geq B^{1+\epsilon}$ implies that for a $k$-partitioner and an input of size $\Delta$, it is required that $\Delta \geq k^d$ for the I/O bounds to hold (recall $d = \max\{1+2/\epsilon,2\}$). The input assumption $\Delta \geq \bigl(\frac{2N}{\Delta} \bigr)^d$ together with $k\leq \frac{2N}{\Delta}$ ensure that $\Delta \geq k^d$.
\end{proof}

We now prove our main result that \textsc{DeterministicFunnelselect} in \wref{alg:deterministic-multiple-selection} is an optimal deterministic cache-oblivious multiple-selection algorithm. Crucial to the analysis is to show that the choice of $\Delta$ balances early pruning of buckets without queries with simultaneously achieving efficient I/O bounds.

\begin{theorem}
    \label{thm:deterministic-funnelselect}
    \textup{\textsc{DeterministicFunnelselect}} performs $O(\Bentropy + N)$ comparisons and $O\bigl(\Bio + \frac{N}{B}\bigr)$ I/Os cache-obliviously in a cache model with tall assumption $M\geq B^{1+\epsilon}$, for some constant~$\epsilon>0$.
\end{theorem}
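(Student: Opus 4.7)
My plan is to prove both bounds simultaneously by induction on $N$, establishing $T_\mathrm{comp}(N,R)\leq C_1(\Bentropy+N)$ and $T_\mathrm{I/O}(N,R)\leq C_2(\Bio+N/B)$ for sufficiently large constants $C_1,C_2$. The inductive step splits on whether line~6 of \wref{alg:deterministic-multiple-selection} invokes \algname{funnelsort} (the ``sort case'') or the multi-partition branch is taken. In the sort case, either triggering condition forces $\Delta\leq N^{\alpha}$ for some constant $\alpha<1$ depending on $d$ and $\epsilon$; combined with the weighted-median property $\sum_{\Delta_i\leq\Delta}\Delta_i\geq N/2+1$, this yields $\Bentropy\geq(N/2)\lg(N/\Delta)=\Omega(N\lg N)$ and (dividing by $B\lg(M/B)$ under the tall cache) $\Bio=\Omega(\mathrm{Sort}_{M,B}(N))$, so the $O(N\lg N)$ comparisons and $O(\mathrm{Sort}_{M,B}(N))$ I/Os of \wref{lem:funnelsort} fit the target bounds.

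For the multi-partition branch, summing \wref{lem:multi-partition}, \wref{lem:weighted-selection}, and the per-bucket \wref{lem:selection} calls gives a per-call cost of $O(N\lg(N/\Delta)+N)$ comparisons and $O\bigl(k^2+\frac{N}{B}(1+\log_{M/B}(N/\Delta))\bigr)$ I/Os. I absorb the $k^2$ term into $O(N/B)$ by noting that the negation of the sort threshold gives $\Delta^2>N^{1+1/(1+\epsilon)}$, which together with $N\geq M\geq B^{1+\epsilon}$ (the case $N<M$ is trivial) yields $k\leq 2N/\Delta = O(\sqrt{N/B})$, while the tall cache implies $\log_M k=\Theta(\log_{M/B}k)$. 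The induction then closes from two facts:
(a) $\sum_j\bar N_j<N/2$, since only gaps with $\Delta_i<\Delta$ can lie fully inside a bucket of size at most $\Delta$, and $\sum_{\Delta_i<\Delta}\Delta_i<N/2+1$ by minimality of~$\Delta$; and
(b) $\Bentropy-\sum_j\bar\Bentropy_j\geq(N/2+1)\lg(N/\Delta)$, because every gap with $\Delta_i\leq\Delta$ loses at least $\Delta_i\lg(N/\Delta)$ of entropy---internal ones via the shrinkage $\bar N_{j(i)}\leq\Delta$, non-internal ones (including those of size exactly $\Delta$) by disappearing entirely from the recursion---while $\sum_{\Delta_i\leq\Delta}\Delta_i\geq N/2+1$ by the choice of $\Delta$. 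The same argument, divided through by $B\lg(M/B)$, yields the matching I/O-entropy drop that closes the I/O induction.

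The hardest part will be tightening the entropy-drop bookkeeping in~(b): gaps crossing a bucket boundary and gaps of size exactly $\Delta$ must both be classified as non-internal so that the weighted-median guarantee $\sum_{\Delta_i\leq\Delta}\Delta_i\geq N/2+1$ can be applied uniformly, and the per-bucket slack of losing the two elements at $r^{\min}$ and $r^{\max}$ to \wref{lem:selection} must be charged to the linear $O(N)$ (resp.\ $O(N/B)$) term rather than creeping into a logarithmic factor. A secondary check is that small base cases where $\Delta=N$ (no recursion) trivially satisfy the bound via the $O(N)$ additive term.
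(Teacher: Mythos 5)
Your proposal is correct and follows essentially the same route as the paper: your induction with the explicit entropy-drop inequality $\Bentropy-\sum_j\bar\Bentropy_j\geq(N/2+1)\lg(N/\Delta)$ together with $\sum_j\bar N_j<N/2$ is exactly the telescoped form of the paper's potential argument (potential $1+\lg(N/\Delta_i)$ per element, released potential $\geq\frac N2+\frac N2\lg\frac N\Delta$), and your treatment of the sort-case threshold, the absorption of the $k^2$ term via $\Delta^2>N^{1+1/(1+\epsilon)}$ and $B\leq N^{1/(1+\epsilon)}$, and the in-cache base case all match the paper's proof of \wref{thm:deterministic-funnelselect}.
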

\begin{proof}
    We first consider the consequences of the choice of $\Delta$. By the choice of $\Delta$, we have $\sum_{\Delta_i < \Delta} \Delta_i < N/2 + 1$.
    Since each bucket $S_i$ has size at most $\Delta$, and we only recurse on subsets that are (the union of) gaps where the two bounding rank queries of the gaps are \emph{both} in the same bucket, we only recurse on gaps with $\Delta_i < \Delta$ elements (see \wref{fig:deterministic-multiple-selection}). 
    A recursive subproblem between query ranks $r_s$ and $r_t$, where $1 \leq s < t \leq q$, contains $r_t-r_s-1 = (\sum_{i=s+1}^t \Delta_i) - 1$ elements. 
    It follows that 
    
    \begin{enumerate}[(A)]
    \item all recursive subproblems in total contain at most $\sum_{\Delta_i < \Delta} \Delta_i - 1 < N/2$ elements and each subproblem has size $\leq \Delta - 2$.
    \item 
    $\sum_{\Delta_i \leq \Delta} \Delta_i \geq N/2 + 1$, \ie, at least $N/2$ elements are in gaps of size at most $\Delta$.
    \end{enumerate}

    To analyze the number of comparisons performed, we use a potential argument where one unit of potential can pay for $O(1)$ comparisons, and all comparisons performed can be charged to the released potential.
    We define the potential of an element $x$ in a gap of size $\Delta_i$ to be $ 1 + \lg \frac{N}{\Delta_i}$, where $N$ is the size of the current recursive subproblem $x$ resides in. The total initial potential is at most $N+\sum_{i=1}^{q+1} \Delta_i \lg \frac{N}{\Delta_i}=O(\Bentropy + N)$. 
    
    We first consider the number of comparisons for the non-sorting case (\wref{alg:deterministic-multiple-selection}, lines~9--26).
    If an element~$x$ in a gap of size $\Delta_i \leq \Delta$ participates in a recursive call of size~$<\Delta$, the potential released for~$x$ is at least $\bigl(1+\lg \frac{N}{\Delta_i}\bigr) - \bigl(1 + \lg \frac{\Delta}{\Delta_i}\bigr) = \lg \frac{N}{\Delta}$. If an element~$x$ in a  gap of size $\Delta_i \leq \Delta$ does not participate in a recursive call, the potential released for~$x$ is $1+\lg \frac{N}{\Delta_i}\geq 1 +\lg \frac{N}{\Delta}$. 
    Finally, elements in gaps of size $> \Delta$ will not participate in recursive calls, and will each release at least potential~1. 
    It follows that the released potential is at least $\frac{N}{2} + \frac{N}{2} \lg \frac{N}{\Delta}$, since at least $N/2$ elements are in gaps of size $\leq \Delta$ (property (B), contributing the second summand) and at most $N/2$ elements are in gaps of size $<\Delta$ and participate in recursive calls (property (A)), \ie, at least $N/2$ elements are in gaps of size $\ge \Delta$ (contributing the first summand). 
    By \wref{lem:multi-partition}, \textsc{MultiPartition} requires $O(N\lg k)$ comparisons, and since $k=O(N/\Delta)$ this can be covered by the released potential. The additional comparisons required for computing~$\Delta$ with a linear-time weighted section algorithm (\wref{lem:weighted-selection}) and performing \textsc{Select} (\wref{lem:selection}) at most twice on each bucket require in total at most $O(N)$ comparisons, and can also be charged to the released potential. It follows that for the non-sorting case the released potential can cover for all comparisons performed.
    
    In the sorting case, a single call to \textsc{Funnelsort} is performed causing $O(N\lg N)$ comparisons (\wref{lem:funnelsort}). No further recursive calls are made and the potential of all elements is released. At least $N + \frac{N}{2} \lg \frac{N}{\Delta}$ potential is released, since at least $N/2$ elements are in gaps of size~$\leq\Delta$ (property (B)). 
    In the sorting case, either $(2N)^d \geq \Delta^{d+1}$ or $N^{1+\frac{1}{1+\epsilon}} \geq \Delta^2$.
    If $(2N)^d \geq \Delta^{d+1}$, we have $\Delta \leq (2N)^{\frac{d}{d+1}}$ and $\frac{N}{\Delta} \geq N/(2N)^{\frac{d}{d+1}} \geq \frac{1}{2} N^{\frac{1}{d+1}}$. It follows that the released potential is at least $N + \frac{N}{2} \lg \bigl(\frac{1}{2} N^{\frac{1}{d+1}}\bigr) \geq \frac{1}{2(d+1)} N\lg N$, covering the cost for the comparisons. 
    Otherwise,  $N^{1+\frac{1}{1+\epsilon}} \geq \Delta^2$, \ie, 
    $\Delta \leq N ^ {\frac{1}{2}\bigl(1+\frac{1}{1+\epsilon}\bigr)}$ and we have
    $\frac{N}{\Delta} 
    \geq N / N ^ {\frac{1}{2}\bigl(1+\frac{1}{1+\epsilon}\bigr)}
    = N^\frac{\epsilon}{2(1+\epsilon)}$
    and the potential released is at least $N + \frac{N}{2} \lg \frac{N}{\Delta} \geq N + \frac{\epsilon}{4(1+\epsilon)}N\lg N$ and can cover the cost for the comparisons.
    Note that the comparison bound depends on the tall-cache parameters~$\epsilon$ and~$d$.
  
    To analyze the I/O cost we assign an I/O potential to an element $x$ in gap of size~$\Delta_i$ of $\frac{1}{B}\bigl(1+\log_M \frac{N}{\Delta_i}\bigr)$, where $N$ is the size of the current subproblem $x$ resides in. 
    Similar to the comparison potential, it follows that the non-sorting case releases I/O potential $\frac{1}{2} \bigl(\frac{N}{B} + \frac{N}{B} \log_M \frac{N}{\Delta}\bigr)$.
    The number of I/Os required is $O\bigl(1 + \frac{N}{B}\bigr)$ I/Os for scanning the input and computing $\Delta$ using weighted selection (\wref{lem:weighted-selection}), $O\bigl(k+\frac{N}{B}\bigr)$ I/Os for selecting the minimum and maximum rank elements in each bucket (\wref{lem:selection}), and $O\bigl(k^2 + \frac{N}{B}(1+\log_M k)\bigr)$ I/Os for the $k$-partitioning (\wref{lem:k-partitioner}), \ie, in total $O\bigl(k^2 + \frac{N}{B}(1 + \log_M k)\bigr)$ I/Os. It follows that the I/O cost can be charged to the released potential, provided $k^2=O\bigl(\frac{N}{B}\bigr)$. 
    To address this, we need to consider two cases depending on the size~$N$ of a subproblem.  If the problem completely fits in internal memory together with all the geometric decreasing recursive subproblems, assuming a stack-oriented memory allocation, then considering this problem will in total cost $O\bigl(1+\frac{N}{B}\bigr)$ I/Os, including all recursive subproblems. That means, there exists a constant $c>0$ such that for $N \leq c M$, the I/O cost for handling such problems can be charged to the parent subproblem creating the subproblem. It follows that we only need to consider the I/O cost for subproblems of size $N \geq cM$. Since $M \geq B^{1+\epsilon}$, we have $N\geq cM\geq c B^{1+\epsilon}$, \ie, 
    $B\leq \bigl(\frac{N}{c}\bigr)^{1/(1+\epsilon)}$. 
    Since $k=O\bigl(\frac{N}{\Delta}\bigr)$, to prove 
    $k^2=O\bigl(\frac{N}{B}\bigr)$ it is sufficient to prove 
    $\bigl(\frac{N}{\Delta}\bigr)^2=O\Bigl(\frac{N}{(N/c)^{1/(1+\epsilon)}}\Bigr)$. 
    This holds, \eg, when $N^{1+\frac{1}{1+\epsilon}} \leq \Delta^2$, which is always fulfilled in the non-sorting case.
    For the sorting case, we have similarly to the comparison potential that $\Omega\bigl(\frac{N}{B}\log_M N\bigr)$ I/O potential is released, which can cover the I/O cost for cache-oblivious sorting (\wref{lem:funnelsort}).
\end{proof}

\section{Conclusion}

With \algname{deterministic funnelselect}, we close the gap left in previous work and obtain an I/O-optimal cache-oblivious multiple-selection algorithm that does not need to resort to randomization to achieve its performance.
This settles the complexity of the multiple-selection problem in the cache-oblivious model (including the fine-grained analysis based on the query-rank entropy $\Bentropy$).

There are open questions left in other variants of the problem.
Like randomized \algname{funnelselect}~\cite{BrodalWild23}, \algname{deterministic funnelselect} cannot deal with queries arriving in an online fashion, one after the other.  This problem has been addressed in the external-memory model~\cite{BarbayGuptaRaoSorenson2016}, but no cache-oblivious I/O-optimal solution is known.

Concerning the transition from single selection by rank to sorting, which multiple selection allows us to study, some questions remain unanswered. 
For example, in the cache-oblivious model, it is known that sorting with optimal I/O-complexity is only possible under a tall-cache assumption (such as the one made in this work); for single selection, however, such a restriction is not necessary. It would be interesting to study the transition between the problems and find out, how ``sorting-like'' a multiple-selection instance has to be to likewise require a tall cache for I/O-optimal cache-oblivious algorithms.

Another direction for future work are parallel algorithms for multiple selection that are also cache-oblivious and I/O efficient.


\bibliographystyle{plainurl}
\bibliography{main}

\end{document}